\newcommand{\subtitle}[1]{%
  \posttitle{%
    \par\end{center}
    \begin{center}\large#1\end{center}
    \vskip0.5em}%
}
\newcommand{\oset}[3][0ex]{%
  \mathrel{\mathop{#3}\limits^{
    \vbox to#1{\kern-2\ex@
    \hbox{$\scriptstyle#2$}\vss}}}}
\newtheorem{thm}{Theorem}
\newtheorem{prop}{Proposition}
\newtheorem{cor}{Corollary}
\theoremstyle{definition}
\newtheorem{definition}{Definition}
\theoremstyle{definition}
\providecommand{\norm}[1]{\lVert#1\rVert}
\providecommand{\R}{\mathbb{R}}
\title{Quantization as a Categorical Equivalence}
\author{Benjamin H.~Feintzeig\\ \texttt{bfeintze@uw.edu}}
\affil{{Department of Philosophy}\\{University of Washington}}
\date{}
\begin{document}

\maketitle

\begin{abstract}
    We demonstrate that, in certain cases, quantization and the classical limit provide functors that are ``almost inverse" to each other.  These functors map between categories of algebraic structures for classical and quantum physics, establishing a categorical equivalence.
\end{abstract}

\section{Introduction}

The purpose of this paper is to develop tools for assessing the extent to which mathematical models of classical and quantum physics share common structure.  To do this, we analyze functors between categories of classical and quantum models.  It is well known \citep{BaBaDo04} that different properties of functors provide information about relations between models in different categories.  Previously, Landsman \citep{La02a} has proposed understanding quantization as a functor;\footnote{Others \citep{Ri93,HoRiSc08,BiGa15} have also provided results concerning senses in which quantization is a functor.} however, he does not analyze the properties of this functor.  The current paper analyzes the properties of quantization functors that bear on the shared structure of classical and quantum physics.

One standard for when a functor demonstrates structural equivalence between mathematical models is when the functor exhibits a \emph{categorical equivalence} between the corresponding categories of models.  This is the standard we appeal to in this paper.  We show that a quantization functor can be supplemented by a classical limit functor, which serves as an ``almost inverse," and which together yield a categorical equivalence between certain categories of models of classical and quantum physics.

Unlike Landsman \citep{La01c,La01a,La02a}, who considers categories whose arrows are Hilbert bimodules, instead we define categories whose arrows are certain types of *-homomorphisms that directly preserve the algebraic structure of a model.  We take this as a first step for defining the actions of quantization and classical limit functors on arrows, although it forces us to significantly restrict the arrows in the categories we consider.  It would be much more desirable to extend the functoriality of quantization, and the categorical equivalence, to more interesting morphisms than the ones considered here.  The current paper thus should be understood as a proof of concept, which we hope to apply in future work to the categories Landsman considers by defining the classical limit of a Hilbert bimodule.  The main result of this paper is thus the framework of using the classical limit functor to establish that quantization is a categorical equivalence.

In what follows, we investigate the quantization of two types of models of classical physics.  In \S\ref{sec:quant}, we introduce background on strict deformation quantization of C*-algebras and the classical limit.  In \S\ref{sec:wquant} we analyze the quantization of the C*-Weyl algebra, which applies to physical systems with linear phase spaces.  In \S\ref{sec:rquant} we analyze Rieffel's quantization, which applies to physical systems, whose phase space carries an action of $\mathbb{R}^d$.  In each case, we establish that corresponding quantization and classical limit functors exhibit a categorical equivalence.  In \S\ref{sec:con}, we conclude with some discussion of the results and future directions.

\section{Quantization and the Classical Limit}
\label{sec:quant}

We will investigate two functors $Q$ and $L$ corresponding to the processes of quantization and the classical limit.  We will present general definitions for each process and then establish that for certain classes of physical systems---corresponding to particular categories of models of classical and quantum physics---these functors form a categorical equivalence.

To quantize a model of classical physics, we begin with a commutative C*-algebra of functions on a phase space and continuously deform the product operation to arrive at a noncommutative C*-algebra of bounded operators on a Hilbert space.  The resulting family of C*-algebras indexed by the parameter $\hbar$ forms a structure called a continuous bundle.\footnote{For more details on continuous bundles of C*-algebras, see Dixmier \citep[][Ch. 10]{Di77}, Kirchberg and Wasserman \citep{KiWa95}, or Landsman \citep[][\S II.1.2]{La98b}.}

\begin{definition}[continuous bundle of C*-algebras]

A \emph{(uniformly)\footnote{See Steeger and Feintzeig \citep[][Appendix B]{StFe21a} for more details on different continuity conditions for bundles of C*-algebras.  All bundles considered in this paper are uniformly continuous.} continuous bundle of C*-algebras} over a locally compact \emph{base space}\footnote{It is possible to include more general locally compact metric spaces as base spaces.  See Steeger and Feintzeig \citep{StFe21a}.} $I\subseteq \mathbb{R}$ (where $I$ contains $0$ as an accumulation point) is: 
\begin{itemize}
    \item a family of C*-algebras $\{\mathfrak{A}_\hbar\}_{\hbar\in I}$ called \emph{fibers};
    \item a C*-algebra $\mathfrak{A}$ of \emph{continuous sections}; and 
    \item a family of surjective *-homomorphisms $\{\phi_\hbar: \mathfrak{A}\to\mathfrak{A}_\hbar\}_{\hbar\in I}$ called \emph{evaluation maps}.
\end{itemize}
Together, these structures must satisfy for each $a\in\mathfrak{A}$,
\begin{enumerate}[(i)]
\item $\norm{a} = \sup_{\hbar\in I}\norm{\phi_\hbar(a)}_\hbar$, where $\norm{\cdot}_\hbar$ denotes the norm on the fiber algebra $\mathfrak{A}_\hbar$;
\item for each uniformly continuous and bounded function $f: I\to \mathbb{C}$, there is a section $fa\in \mathfrak{A}$ such that $\phi_\hbar(fa) = f(\hbar)\phi_\hbar(a)$;
\item the map $\hbar\mapsto \norm{\phi_\hbar(a)}_\hbar$ is uniformly continuous and bounded.
\end{enumerate}
\end{definition}

Such continuous bundles can be constructed by deforming the product in the direction of the Poisson bracket of a classical phase space.  One can do so with the following notion of a quantization map.

\begin{definition}[strict deformation quantization]
A \emph{strict deformation quantization} (over $I\subseteq \mathbb{R}$ with $0\in I$) of a manifold $M$ with a *-algebra $\mathcal{P}\subseteq C_b(M)$ of continuous, bounded functions carrying a Poisson bracket is:
\begin{itemize}
    \item a family of C*-algebras $\{\mathfrak{A}_\hbar\}_{\hbar\in I}$; and
    \item a family of linear \emph{quantization maps} $\{\mathcal{Q}_\hbar: \mathcal{P}\to\mathfrak{A}_\hbar\}_{\hbar\in I}$, where $\mathcal{Q}_0$ is the identity.
\end{itemize}
Together, these structures must satisfy for each $f,g\in\mathcal{P}$,
\begin{enumerate}[(i)]
    \item (von Neumann's condition) $\lim_{\hbar\to 0}\norm{\mathcal{Q}_\hbar(f)\mathcal{Q}_\hbar(g) - \mathcal{Q}_\hbar(fg)}_\hbar = 0$;
    \item (Dirac's condition) $\lim_{\hbar\to 0}\norm{\frac{i}{\hbar}\big[\mathcal{Q}_\hbar(f),\mathcal{Q}_\hbar(g)\big] - \mathcal{Q}_\hbar\big(\{f,g\}\big)}_\hbar = 0$;
    \item (Rieffel's condition) the map $\hbar\mapsto \norm{\mathcal{Q}_\hbar(f)}_\hbar$ is continuous on $I$;
    \item (Deformation condition) for each $\hbar\in I$, the map $\mathcal{Q}_\hbar$ is injective, its image $\mathcal{Q}_\hbar[\mathcal{P}]$ is closed under the product in $\mathfrak{A}_\hbar$, and $\mathcal{Q}_\hbar[\mathcal{P}]$ is dense in $\mathfrak{A}_\hbar$.
\end{enumerate}
\end{definition}

\noindent Every strict deformation quantization satisfying mild technical conditions defines a continuous bundle of C*-algebras (See Landsman \citep[][\S II.1.2]{La98b} or Steeger and Feintzeig \citep[][Appendix A]{StFe21a}).  The algebra of sections is generated by the maps $[\hbar\mapsto \mathcal{Q}_\hbar(f)]$ for each $f\in \mathcal{P}$ and the maps $\phi_\hbar$ are given concretely as evaluation of the sections at a particular value $\hbar\in  I$.

In the opposite direction of quantization, the classical limit can be understood as the process of restricting a continuous bundle of C*-algebras obtained from quantization back to the commutative C*-algebra at $\hbar = 0$.  Steeger and Feintzeig \citep{StFe21a} show that the fiber algebra $\mathfrak{A}_0$ at $\hbar = 0$ can be reconstructed from a given continuous bundle of C*-algebras $((\mathfrak{A}_\hbar,\phi_\hbar)_{\hbar\in  I},\mathfrak{A})$ over $I = (0,1]$ containing only information about the quantum theory for $\hbar>0$. To do so, consider the closed two-sided ideal $K_0 = \{a\in\mathfrak{A}\ |\ \lim_{\hbar\to 0}\norm{\phi_\hbar(a)}_\hbar = 0\}$ of sections vanishing at $\hbar = 0$. Steeger and Feintzeig \citep[][\S4]{StFe21a} show that the quotient
\begin{align}
\label{eq:alglim}
    \mathfrak{A}_0 = \mathfrak{A}/K_0
\end{align}
is the unique limit point C*-algebra at $\hbar\to 0$ of the bundle up to *-isomorphism and that the quotient map $\phi_0: \mathfrak{A}\to \mathfrak{A}/K_0$ defines the unique evaluation map at the fiber over $\hbar = 0$.  This procedure allows one to reconstruct the fiber algebra $\mathfrak{A}_0$ of the classical theory at $\hbar = 0$ from the bundle of quantum algebras for $\hbar>0$. 

We now have enough tools to define the action of quantization and the classical limit on objects.  Quantization associates a Poisson algebra $\mathcal{P}$ of functions on $M$ to a non-commutative C*-algebra $\mathfrak{A}_\hbar$ obtained by strict deformation quantization.  On the other hand, the classical limit associates a non-commutative C*-algebra $\mathfrak{A}_\hbar$ to the unique commutative algebra $\mathfrak{A}_0$ obtained as the $\hbar\to 0$ limit of some continuous bundle, which is an algebra of functions on the phase space.  In order to draw structural comparisons, we further need a way to associate morphisms of classical and quantum models with one another.

One can take the classical limit of a morphism as follows.  Consider two continuous bundles of C*-algebras $((\mathfrak{A}_\hbar,\phi_\hbar)_{\hbar\in I},\mathfrak{A})$ and $((\mathfrak{B}_\hbar,\psi_\hbar)_{\hbar\in I},\mathfrak{B})$ over $I = (0,1]$ representing quantum systems for $\hbar>0$.  Suppose one has a family of morphisms $(\alpha_\hbar: \mathfrak{A}_\hbar\to\mathfrak{B}_\hbar)_{\hbar\in I}$ of the fibers for $\hbar >0$ that lift to a *-homomorphism $\alpha:\mathfrak{A}\to \mathfrak{B}$ of the algebras of sections commuting with the evaluation maps in the sense that
\begin{align}
\label{eq:morphlift}
    \alpha_\hbar\circ\phi_\hbar = \psi_\hbar\circ \alpha 
\end{align}
for each $\hbar>0$.  Steeger and Feintzeig \citep[][\S5]{StFe21a} show that in this situation the morphism is appropriately continuous in $\hbar$ so that there is a unique limit morphism $\alpha_0: \mathfrak{A}_0\to \mathfrak{B}_0$ obtained by factoring through the quotient $\mathfrak{A}/K_0$ and thus satisfying
\begin{align}
\label{eq:morphlim}
\alpha_0\circ\phi_0 = \psi_0\circ \alpha.
\end{align}
This provides a direct way to associate morphisms of a model of quantum physics with morphisms of a model of classical physics through the classical limit.

Now we will encode the conditions under which one can take the classical limit of a morphism of a fiber at a given value $\hbar>0$.  We will associate with each family of quantization maps $\{\mathcal{Q}_\hbar\}_{\hbar\in I}$ a collection of \emph{rescaling maps} $\big\{R^\mathcal{Q}_{\hbar\to\hbar'}: \mathcal{Q}_\hbar[\mathcal{P}]\to\mathcal{Q}_{\hbar'}[\mathcal{P}]\big\}_{\hbar,\hbar'\in I}$ defined by
\begin{align}
    R^\mathcal{Q}_{\hbar\to\hbar'} = \mathcal{Q}_{\hbar'}\circ(\mathcal{Q}_\hbar)^{-1}
\end{align}
for any $\hbar,\hbar'\in I$.

\begin{definition}[morphisms]
\label{def:morph}
Suppose one has two strict deformation quantizations $(\mathfrak{A}_\hbar,\mathcal{Q}_\hbar)_{\hbar\in  I}$ and $(\mathfrak{B}_\hbar,\mathcal{Q}'_\hbar)_{\hbar\in  I}$ over $I \subseteq \mathbb{R}$ of $\mathcal{P}$ and $\mathcal{P}'$, respectively.  A *-homomorphism $\alpha_\hbar: \mathfrak{A}_\hbar\to\mathfrak{B}_\hbar$ between the fiber algebras at a fixed value $\hbar\neq 0\in  I$ is called
\begin{enumerate}[(i)]
\item \emph{smooth} if $\alpha_\hbar\big[\mathcal{Q}_\hbar[\mathcal{P}]\big]\subseteq \mathcal{Q}'_\hbar[\mathcal{P}']$;
\item \emph{scaling} if for every $\hbar' \neq 0$, the map
\begin{align}
    \alpha_{\hbar'} = R^{\mathcal{Q}'}_{\hbar\to\hbar'} \circ \alpha_\hbar\circ R^\mathcal{Q}_{\hbar'\to\hbar}
\end{align}
extends continuously to a *-homomorphism $\mathfrak{A}_{\hbar'}\to\mathfrak{B}_{\hbar'}$.
\end{enumerate}
\end{definition}

\noindent The smoothness condition says that a morphism preserves the additional structure of the collection of quantized smooth functions on which the Poisson bracket is defined.  Insofar as the information that certain quantities are smooth, in addition to being merely continuous, is part of the physical theory, structure-preserving maps between quantum models should encode this structure.  The scaling condition says that a morphism preserves the algebraic structure regardless of the numerical value of $\hbar$, where the rescaling maps are used to shift the morphism to different values of $\hbar'>0$ in order to make the comparison.  Insofar as the numerical value of Planck's constant $\hbar$ in a strict deformation quantization depends on a system of units (See Feintzeig \citep{Fe20}), it is merely a conventional choice, which the status of a map as preserving the structure of the model should not depend on.  Indeed, since it is typical in a strict deformation quantization---and it holds true in all cases considered in this paper---that all of the algebras $\mathfrak{A}_\hbar$ for $\hbar>0$ are *-isomorphic, we can understand all of the algebras for the quantum theory in different systems of units (different values of $\hbar>0$) as having the same structure.  By understanding morphisms of a quantum theory as *-homomorphisms satisfying the scaling condition, we are only requiring that these structure-preserving maps respect this structural sameness in different systems of units.\footnote{In fact, we leave it as an open question whether there even exist morphisms between fiber C*-algebras of a strict deformation quantization that do not satisfy the scaling condition in cases of interest.  We have not been able to find morphisms between the fiber C*-algebras used in \S\ref{sec:wquant}-\ref{sec:rquant} that fail the scaling condition.}

If a *-homomorphism $\alpha_\hbar: \mathfrak{A}_\hbar\to\mathfrak{B}_\hbar$ is scaling, then the construction surrounding Eq. (\ref{eq:morphlift}) provides a lift of the family $(\alpha_\hbar)_{\hbar\in I}$ to a morphism of the algebra of sections of the bundle and produces a unique limit morphism $\alpha_0$ satisfying Eq. (\ref{eq:morphlim}).  If $\alpha_\hbar$ is smooth, then it follows that $\alpha_0$ preserves the privileged Poisson subalgebra \emph{and} the Poisson bracket defined on it \citep[][Prop. 5.5]{StFe21a}.

We pause here to consider a possible objection to the conditions we require on morphisms in Def. \ref{def:morph}.  It is well-known that some quantizations on compact phase spaces lead to situations where the quantization maps are only defined for discrete values of $\hbar$.  So one might worry that the scaling condition is too restrictive and may not be satisfied for all values of $\hbar>0$.  In the rest of this paper, we will restrict attention to quantizations where $\hbar$ takes continuous values in $(0,1]$ for convenience. So in our examples, the scaling condition does require that morphisms are translatable to all continuous values of $\hbar >0$.  However, our definition is general enough to accommodate other base spaces, even ones with only discrete values of $\hbar$.  In general, one may define strict quantizations and continuous bundles of C*-algebras for locally compact base spaces.  Moreover, the construction of the classical limit from Steeger \& Feintzeig \citep{StFe21a} also applies for more general locally compact base spaces, as long as the base space carries the additional structure of a metric.  In situations where the base space is different than $(0,1]$, the scaling condition only imposes a requirement for those values of $\hbar\neq 0$ in the base space on which the quantization maps are defined.  The central idea of the scaling condition is only to require that morphisms be translatable among all of the allowed values of $\hbar$ by the rescaling maps.  As long as morphisms can be translated to all the values of $\hbar$ in the base space (whether discrete or continuous), those morphisms can be lifted and extended to a limit point of the base space by the construction of Steeger and Feintzeig.  As such, the scaling condition is a plausible condition on morphisms, and indeed we do not know of any morphisms that fail to satisfy the scaling condition.

Now that we have tools for understanding quantization and the classical limit, we will proceed to characterize two categories of models of classical physics that can be quantized functorially, and whose quantization we will demonstrate provides a categorical equivalence.

\section{The C*-Weyl Algebra for Linear Phase Spaces}
\label{sec:wquant}

One standard method for quantizing classical theories via the C*-Weyl algebra applies to systems whose phase space is the dual space (i.e., the collection of continuous linear functionals) $V'$ of a topological vector space $V$ with a symplectic form $\sigma$ (i.e., a non-degenerate, bilinear, antisymmetric map $V\times V\to\mathbb{R}$).  In this case, the Poisson *-algebra $\Delta(V,0) \subseteq C_b(V')$ is generated by linear combinations of the functions $W_0(f): V'\to\mathbb{C}$ for each fixed $f\in V$ defined by
\begin{align}
W_0(f)(F) = e^{iF(f)}
\end{align}
for all $F\in V'$.  The Poisson bracket on $\Delta(V,0)$ is defined by the linear extension of
\begin{align}
    \{W_0(f),W_0(g)\} = \sigma(f,g)W_0(f+g)
\end{align}
for all $f,g\in V$.  This algebra $\Delta(V,0)$ is norm dense in the C*-algebra $AP(V')$ of continuous almost periodic functions on the phase space $V'$.  This structure specifies the classical model.

The corresponding quantum model is obtained through the exponentiated Weyl form of the canonical commutation relations, which define for each $\hbar>0$ a C*-algebra $\mathcal{W}(V,\hbar\sigma)$.  A dense *-subalgebra $\Delta(V,\hbar\sigma)$ is generated freely by linearly independent elements of the form $W_\hbar(f)$ for $f\in V$ with multiplication and involution operations specified by
\begin{align}
    W_\hbar(f)W_\hbar(g) &= e^{-\frac{i}{2}\hbar\sigma(f,g)}W_\hbar(f+g)\\
    W_\hbar(f)^* &= W_\hbar(-f)
\end{align}
for all $f,g\in V$.  There is a unique maximal C*-norm on $\Delta(V,\hbar\sigma)$ and the C*-Weyl algebra $\mathcal{W}(V,\hbar\sigma) = \overline{\Delta(V,\hbar\sigma)}$ is defined as the completion of this dense *-subalgebra with respect to the C*-norm \citep[See][]{Sl72,MaSiTeVe74,Pe90,BiHoRi04a}.

In the special case where $V = \mathbb{R}^{2n}$, one can understand $\mathcal{W}(V,\hbar\sigma)$ through the standard Schr\"{o}dinger representation $\pi_S$ on $\mathcal{H}_S = L^2(\R^n)$.  In this case, if we let $Q^\hbar_j$ and $P^\hbar_j$ denote the position and momentum operators
\begin{align}
\label{eq:posmom}
    (Q^\hbar_j\psi)(x) &= x_j\cdot\psi(x)\\
    (P^\hbar_j\psi)(x) &= i\hbar\frac{\partial}{\partial x_j} \psi(x)\end{align}
for all $\psi\in L^2(\R)$, then $\pi_S$ is the continuous linear extension of the representation
\begin{align}
\label{eq:srep}
\pi_S(W_\hbar(a,b)) = e^{i\sum_{j=1}^n a_j\cdot P^\hbar_j+ib_j\cdot Q^\hbar_j}
\end{align}
so that $\mathcal{W}(V,\hbar\sigma)$ can be understood as the C*-algebra generated by exponentials of configuration and momentum quantities.

The quantization maps $\mathcal{Q}_\hbar: \Delta(V,0)\to\mathcal{W}(V,\hbar\sigma)$ are given for $\hbar \in [0,1]$ by the linear extension of 
\begin{align}
\label{eq:weylquant}
    \mathcal{Q}_\hbar(W_0(f)) = W_\hbar(f)
\end{align}
for all $f\in V$.  These quantization maps define a strict deformation quantization \citep{BiHoRi04b} on $M = V'$ for the Poisson *-algebra $\mathcal{P} = \Delta(V,0)\subseteq \mathfrak{A}_0 = AP(V')$ and fiber C*-algebras $\mathfrak{A}_\hbar = \mathcal{W}(V,\hbar\sigma)$ for $\hbar>0$.  

One can define a category of classical models with linear phase spaces, as follows.  This category will form the domain of a quantization functor.

\begin{definition}
We denote the following category by $\mathbf{LinClass}$:
\begin{itemize}
    \item \emph{Objects} are pairs $\big(AP(V'), \Delta(V,0)\big)$, where $AP(V')$ is the C*-algebra of almost periodic functions on the dual to a topological vector space $V$, and $\Delta(V,0)$ is the dense Poisson *-subalgebra with Poisson bracket defined by a symplectic form $\sigma$.
    \item \emph{Arrows} are *-homomorphisms $\alpha_0: AP(V')\to AP(U')$ for symplectic topological vector spaces $(V,\sigma)$ and $(U,\sigma')$ that are \emph{smooth} in the sense that 
\begin{align}
\alpha_0\big[\Delta(V,0)\big]\subseteq \Delta(U,0)
\end{align}
and \emph{Poisson} in the sense that \begin{align}
    \alpha_0\big(\{A,B\}_V\big) = \big\{\alpha_0(A),\alpha_0(B)\big\}_U
\end{align}for all $A,B\in \Delta(V,0)$.
\end{itemize}
\end{definition}

\noindent Note that this category is general enough to include infinite-dimensional phase spaces representing linear classical field theories.  The morphisms in this category preserve the structure of classical models at $\hbar = 0$ as symplectic phase spaces.

Similarly, one can define a category of quantum models corresponding to these linear phase spaces.

\begin{definition}
We denote the following category by $\mathbf{LinQuant}$: \begin{itemize}
    \item \emph{Objects} are strict quantizations $\big(\mathcal{W}(V,\hbar\sigma),\Delta(V,\hbar\sigma),\mathcal{Q}_\hbar\big)_{\hbar\in (0,1]}$ of $\mathcal{P} = \Delta(V,0)$, as defined in Eq. (\ref{eq:weylquant}).
    \item \emph{Arrows} are smooth, scaling *-homomorphisms $\alpha_1: \mathfrak{A}_1\to\mathfrak{B}_1$, where $\mathfrak{A}_1 = \mathcal{W}(V,\sigma)$ and $\mathfrak{B}_1 = \mathcal{W}(U,\sigma')$ are the C*-Weyl algebras at $\hbar = 1$ for symplectic topological vector spaces $(V,\sigma)$ and $(U,\sigma')$, respectively.
\end{itemize} 
\end{definition}

\noindent The morphisms in this category thus preserve the structure of the fully quantized models as non-commutative C*-algebras of operators at $\hbar = 1$.

The following proposition characterizing classical morphisms in $\mathbf{LinClass}$ is essential for the definition of the quantization functor.

\begin{prop}
\label{prop:linmorph}
If $\alpha_0: AP(V')\to AP(U')$ is a morphism in $\mathbf{LinClass}$ for symplectic topological vector spaces $V$ and $U$, then there is a character $\chi: V\to \mathbb{C}$ on the additive group $V$ and an additive, symplectic, origin-preserving transformation $T: V\to U$ such that
\begin{align}
\label{eq:classWmorph}
    \alpha_0(W_0(f)) = \chi(f)W_0(Tf)
\end{align}
for each $f\in V$.
\end{prop}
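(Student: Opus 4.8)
The plan is to exploit the fact that the $W_0(f)$ are the characters of the additive group $V$, so they form a linearly independent family in $AP(V')$ and their span $\Delta(V,0)$ is closed under multiplication with $W_0(f)W_0(g) = W_0(f+g)$. First I would observe that a smooth morphism $\alpha_0$ sends $\Delta(V,0)$ into $\Delta(U,0)$, so for each fixed $f \in V$ we may write $\alpha_0(W_0(f)) = \sum_{k} c_k(f) W_0(u_k(f))$ as a finite linear combination of distinct generators $W_0(u_k(f))$ of $\Delta(U,0)$ with nonzero coefficients $c_k(f)$. The goal is to show that exactly one term appears, i.e. that $\alpha_0(W_0(f))$ is always a scalar multiple of a single $W_0(u)$; then the scalar defines $\chi$ and the assignment $f \mapsto u$ defines $T$.

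The key step is to use multiplicativity of $\alpha_0$ together with the relation $W_0(f)W_0(g) = W_0(f+g)$. Applying $\alpha_0$ to $W_0(f)W_0(f) = W_0(2f)$ (and more generally to $W_0(f)^n = W_0(nf)$) and comparing, I would expand the left side as a product of sums of distinct characters of $V'$ — again a finite linear combination with the characters $W_0(u_j(f) + u_k(f))$ appearing — and match it against the right side's expansion. Because distinct characters of an abelian group are linearly independent (Artin's lemma / linear independence of characters on $V'$), the sets of characters appearing on both sides must coincide with multiplicities. A counting argument on the "support" (the finite set of $u_k(f)$'s appearing) forces the support of $\alpha_0(W_0(f))$ to be a single point: if there were two distinct $u_1(f) \neq u_2(f)$, the product $\alpha_0(W_0(f))^2$ would have support of size strictly greater than that of $\alpha_0(W_0(2f))$ unless the support collapses, and iterating leads to a contradiction with the growth of supports; alternatively one checks directly that $\alpha_0(W_0(f))$ must be a character of $V'$ — a homomorphism $V' \to \C$ of modulus-bounded functions — hence of the form $\chi(f) W_0(u)$ for a single $u \in V''$, and smoothness pins $u$ down in $U$. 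Having established $\alpha_0(W_0(f)) = \chi(f) W_0(Tf)$ for maps $\chi: V \to \C$ and $T: V \to U$, I would then extract the stated properties of $\chi$ and $T$: applying $\alpha_0$ to $W_0(f)W_0(g) = W_0(f+g)$ gives $\chi(f)\chi(g) W_0(Tf)W_0(Tg) = \chi(f+g) W_0(T(f+g))$, i.e. $W_0(Tf + Tg)$ scaled, and linear independence of characters forces $T(f+g) = Tf + Tg$ (additivity) and $\chi(f+g) = \chi(f)\chi(g)$ (so $\chi$ is a character on the additive group $V$); the $*$-homomorphism property $\alpha_0(W_0(f)^*) = \alpha_0(W_0(f))^*$ together with $W_0(f)^* = W_0(-f)$ gives $\chi(-f) = \overline{\chi(f)}$ and $T(-f) = -Tf$, and since $W_0(0) = 1$ is the unit, unitality of $\alpha_0$ gives $\chi(0) = 1$, i.e. $T$ is origin-preserving. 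Finally, the Poisson condition $\alpha_0(\{W_0(f),W_0(g)\}_V) = \{\alpha_0(W_0(f)),\alpha_0(W_0(g))\}_U$ unwinds, using $\{W_0(f),W_0(g)\} = \sigma(f,g) W_0(f+g)$ and the formula just derived, to $\sigma(f,g)\chi(f+g) W_0(T(f+g)) = \sigma'(Tf,Tg)\chi(f)\chi(g) W_0(Tf+Tg)$; since $\chi(f+g) = \chi(f)\chi(g) \neq 0$ and $T(f+g) = Tf+Tg$, this yields $\sigma(f,g) = \sigma'(Tf,Tg)$, so $T$ is symplectic.

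The main obstacle I anticipate is the first step — proving that the image of a single generator has one-element support, i.e. that $\alpha_0(W_0(f))$ is again (a scalar times) a single Weyl unitary rather than a genuine linear combination. The clean way to handle this is probably to argue that $A := \alpha_0(W_0(f))$ satisfies $A^n = \alpha_0(W_0(nf))$ for all $n \in \Z$, together with $\|A\| \le 1$ and $A$ invertible with $\|A^{-1}\| \le 1$ (since $W_0(f)$ is unitary and $\alpha_0$ is a $*$-homomorphism), so $A$ is a unitary element of the commutative C*-algebra $AP(U')$ whose integer powers all lie in $\Delta(U,0)$; realizing $AP(U')$ as $C(bU')$ (functions on the Bohr-type compactification, i.e. the Gelfand spectrum) shows $A$ corresponds to a continuous function of modulus $1$ all of whose powers are finite sums of the characters $W_0(u)$, which forces $A$ itself to be a single character up to a unimodular constant. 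I would keep this argument at the level of "linear independence of characters plus the group law constrains the support to a singleton," citing the standard fact, rather than grinding through the combinatorics.
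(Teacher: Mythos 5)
Your proposal is correct, but it takes a genuinely different route from the paper's. The paper argues globally: it identifies $AP(V')\cong C(\hat V)$ and $AP(U')\cong C(\hat U)$ for the compact character groups of $V$ and $U$ (taken discrete), obtains from Gelfand duality a continuous map $\hat T:\hat U\to\hat V$ implementing $\alpha_0$ by pullback, normalizes it by its value $\hat Te$ at the identity character, and then dualizes via Pontryagin duality ($V\cong\hat{\hat V}$) to produce $T:V\to U$ and the character $\chi=\hat Te$. You instead work generator by generator inside $\Delta(U,0)$: smoothness puts $\alpha_0(W_0(f))$ in the span of finitely many characters, multiplicativity plus linear independence of characters collapses the support to a single point, and the Weyl relations and the Poisson condition then yield additivity of $T$, the character law for $\chi$, origin preservation, and symplecticity. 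Your derivation of those last properties is exactly what the paper leaves as a ``direct calculation,'' and your approach has the virtue of being elementary and of making visible where smoothness is actually used --- the same place the paper's proof quietly needs it when asserting that $\chi\mapsto(\hat T'\chi)(f)$ is a character of $\hat U$. One caution on the step you rightly flag as the crux: your first argument (counting growth of supports under taking powers) is not airtight, since products of trigonometric polynomials can cancel; your second argument is the right one and can be closed cleanly without appeal to an external ``standard fact.'' If $A=\alpha_0(W_0(f))=\sum_{k=1}^m c_kW_0(u_k)$ with the $u_k$ distinct and all $c_k\neq 0$, then $AA^*=1$ expands as $\sum_{j,k}c_j\overline{c_k}\,W_0(u_j-u_k)$; choosing a linear functional $\ell$ on $U$ that separates the finitely many $u_k$ and letting $u_{j_0}$, $u_{k_0}$ maximize and minimize $\ell$, the difference $u_{j_0}-u_{k_0}$ is realized by exactly one pair, so linear independence of characters forces $c_{j_0}\overline{c_{k_0}}=0$ unless $m=1$. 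With that lemma in hand, your proof is complete and, in my view, more transparent than the duality argument, though the paper's version generalizes more readily since it never needs to manipulate coefficients.
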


\begin{proof}
First, note that $AP(V')\cong C(\hat{V})$, where $\hat{V}$ is the compact space of all characters on $V$ with the topology of pointwise convergence, and similarly for $AP(U')\cong C(\hat{U})$.  The isomorphism here is defined as follows.  For each $f\in V$, consider the function $\eta(f)\in C(\hat{V})$ defined by
\begin{align}
    \eta(f)(\chi) = \chi(f)
\end{align}
for all $\chi\in\hat{V}$.  Then the isomorphism $AP(V')\cong C(\hat{V})$ is given by the continuous linear extension of
\begin{align}
    W_0(f)\in AP(V')\mapsto \eta(f)\in C(\hat{V})
\end{align}
for all $f\in V$ \citep[Thm. 4-3, p. 2903]{BiHoRi04a}.  In what follows, we will use the symbol $W_0(f)$ for both the element of $AP(V')$ and for the element $\eta(f)\in C(\hat{V})$.

By Gelfand duality \citep[][\S C.2-3]{La17}, there is a unique map $\hat{T}: \hat{U}\to\hat{V}$, continuous in the topologies of pointwise convergence on $\hat{U}$ and $\hat{V}$, such that
\begin{align}
   (\hat{T}\chi)(f) = W_0(f)(\hat{T}\chi) = \alpha_0(W_0(f))(\chi)
\end{align}
for every $f\in V$ and $\chi\in \hat{U}$.  From this, define $\hat{T}': \hat{U}\to\hat{V}$ by
\begin{align}
    \hat{T}'\chi = (\hat{T}\chi)\cdot (\hat{T} e)^{-1}
\end{align}
for all $\chi\in\hat{U}$, where $e\in \hat{U}$ is the identity element of the character group, i.e., the function $e(f) = 1$ for all $f\in U$.  

We further know that $V\cong \hat{\hat{V}}$ (and similarly for $U$), where $\hat{\hat{V}}$ is the group of continuous characters on $\hat{V}$, when the latter is given the topology of pointwise convergence.  The existence of an isomorphism here follows from Pontryagin duality \citep[p. 27]{Ru62} because $V$ is locally compact when considered with the discrete topology (in which $\hat{V}$ is the collection of all continuous characters on $V$), even though $V$ may \emph{not} be locally compact in its original vector space topology.

The identification $V\cong \hat{\hat{V}}$ allows us to define $T: V\to U$ by
\begin{align}
    (Tf)(\chi) = (\hat{T}'\chi)(f) = \frac{(\hat{T}\chi)(f)}{(\hat{T}e)(f)}
\end{align}
for all $f\in V$ and $\chi\in\hat{U}$.

It follows by direct calculation that $T$ is additive, symplectic and origin-preserving, i.e., 
\begin{align}
    T(f+g) &= Tf + Tg\\
    \sigma(f,g) &= \sigma'(Tf,Tg)\\
    T(0) &= 0
\end{align}
for all $f,g\in V$.  Moreover, it follows that for $f\in V$,
\begin{align}
    (\hat{T}e)(f)\cdot W_0(Tf)(\chi) = (\hat{T}\chi)(f) = \alpha_0(W_0(f))(\chi)
\end{align}
for every $\chi\in \hat{U}$.  Hence, we have shown
\begin{align}
    \alpha_0(W_0(f)) = (\hat{T}e)(f) W_0(Tf)
\end{align}
for all $f\in V$.
\end{proof}

\noindent This proposition vindicates the earlier remark that morphisms in $\mathbf{LinClass}$ preserve the structure of classical symplectic phase spaces.  It also provides a way to lift a morphism to any value $\hbar>0$, as follows.

\begin{cor}
\label{cor:weylmorph}
Suppose $\alpha_0: AP(V')\to AP(U')$ is a morphism in $\mathbf{LinClass}$ for symplectic topological vector spaces $(V,\sigma)$ and $(U,\sigma')$.  Suppose $\alpha_0$ is associated with the continuous character $\chi:V\to\mathbb{C}$ and additive, symplectic, origin-preserving transformation $T:V\to U$ as in Eq. (\ref{eq:classWmorph}).  For any $\hbar>0$, define
the map $\alpha_\hbar: \mathcal{W}(V,\hbar\sigma)\to\mathcal{W}(U,\hbar\sigma')$ as the continuous linear extension of
\begin{align}
\label{eq:weylmorph}
    \alpha_\hbar(W_\hbar(f)) = \chi(f)W_\hbar(Tf)
\end{align}
for all $f\in V$.  Then $\alpha_\hbar$ is a *-homomorphism; in particular $\alpha_1$ is smooth and scaling, and thus is a morphism in $\mathbf{LinQuant}$.\footnote{This is a slight generalization of the results in \citep[][\S D]{BiHoRi04b}, which treats the case where $T$ is bijective and linear.  One can easily check that these conditions are not necessary for $\alpha_\hbar$ as given in Eq. (\ref{eq:weylmorph}) to define a *-homomorphism.}
\end{cor}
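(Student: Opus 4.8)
The plan is to construct $\alpha_\hbar$ in stages: first as a linear map on the dense *-subalgebra $\Delta(V,\hbar\sigma)$, then check it is a *-homomorphism there, then extend it continuously to $\mathcal{W}(V,\hbar\sigma)$; finally, for $\hbar=1$ one verifies the smoothness and scaling conditions of Definition \ref{def:morph}. The substantive structural input (existence of the character $\chi$ and the transformation $T$) has already been supplied by Proposition \ref{prop:linmorph}, so what remains is verification, and I expect the only step demanding more than direct manipulation to be the passage to the C*-completion.

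For the algebraic part, since $\{W_\hbar(f)\}_{f\in V}$ is a linearly independent generating set of $\Delta(V,\hbar\sigma)$, the assignment $W_\hbar(f)\mapsto\chi(f)W_\hbar(Tf)$ determines a unique linear map $\alpha_\hbar:\Delta(V,\hbar\sigma)\to\Delta(U,\hbar\sigma')$. By bilinearity of the product, multiplicativity need only be checked on pairs of generators: using the Weyl relation $W_\hbar(f)W_\hbar(g)=e^{-\frac{i}{2}\hbar\sigma(f,g)}W_\hbar(f+g)$, the character identity $\chi(f+g)=\chi(f)\chi(g)$, and additivity and symplecticity of $T$ ($T(f+g)=Tf+Tg$, $\sigma'(Tf,Tg)=\sigma(f,g)$), both $\alpha_\hbar(W_\hbar(f)W_\hbar(g))$ and $\alpha_\hbar(W_\hbar(f))\alpha_\hbar(W_\hbar(g))$ collapse to $e^{-\frac{i}{2}\hbar\sigma(f,g)}\chi(f+g)W_\hbar(T(f+g))$. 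For the involution, from $W_\hbar(f)^*=W_\hbar(-f)$, $T(-f)=-Tf$ (additivity and origin-preservation of $T$), and $\overline{\chi(f)}=\chi(-f)$ (characters are $U(1)$-valued), one gets $\alpha_\hbar(W_\hbar(f)^*)=\alpha_\hbar(W_\hbar(f))^*$, and conjugate-linearity propagates this to all of $\Delta(V,\hbar\sigma)$. Note that only additivity of $T$ over the group $V$, together with symplecticity and origin-preservation, is used here---$\mathbb{R}$-linearity and bijectivity of $T$ are not needed---which is exactly what yields the generalization noted after the statement.

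To extend $\alpha_\hbar$ to $\mathcal{W}(V,\hbar\sigma)$ I would invoke the fact that the C*-norm on $\Delta(V,\hbar\sigma)$ is the \emph{maximal} C*-norm (equivalently, the universal property of the C*-Weyl algebra): composing $\alpha_\hbar$ with a faithful representation of $\mathcal{W}(U,\hbar\sigma')$ gives a *-representation of $\Delta(V,\hbar\sigma)$, which is automatically bounded by the maximal C*-norm, so $\norm{\alpha_\hbar(a)}_{\mathcal{W}(U,\hbar\sigma')}\leq\norm{a}_{\mathcal{W}(V,\hbar\sigma)}$ for every $a\in\Delta(V,\hbar\sigma)$. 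Hence $\alpha_\hbar$ is contractive and extends continuously to $\mathcal{W}(V,\hbar\sigma)$, the extension remaining a *-homomorphism since the algebraic operations are continuous. (Equivalently, one may observe directly that the unitaries $\chi(f)W_\hbar(Tf)\in\mathcal{W}(U,\hbar\sigma')$ satisfy the defining Weyl relations for $(U,\hbar\sigma')$ and apply the universal property.) This boundedness/extension step is the only place where one appeals to something beyond routine computation.

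Finally, specialize to $\hbar=1$. Smoothness is immediate: $\mathcal{Q}_1[\Delta(V,0)]$ is the linear span of the $W_1(f)$, i.e.\ $\Delta(V,\sigma)$, and $\alpha_1$ sends each $W_1(f)$ to $\chi(f)W_1(Tf)\in\Delta(U,\sigma')=\mathcal{Q}'_1[\Delta(U,0)]$, whence $\alpha_1[\mathcal{Q}_1[\Delta(V,0)]]\subseteq\mathcal{Q}'_1[\Delta(U,0)]$. For the scaling condition, unwinding $R^{\mathcal{Q}}_{\hbar'\to 1}=\mathcal{Q}_1\circ(\mathcal{Q}_{\hbar'})^{-1}$ and $R^{\mathcal{Q}'}_{1\to\hbar'}=\mathcal{Q}'_{\hbar'}\circ(\mathcal{Q}'_1)^{-1}$ (using injectivity of the quantization maps and $\mathcal{Q}_\hbar(W_0(f))=W_\hbar(f)$) shows that $\alpha_{\hbar'}=R^{\mathcal{Q}'}_{1\to\hbar'}\circ\alpha_1\circ R^{\mathcal{Q}}_{\hbar'\to 1}$ acts on generators by $W_{\hbar'}(f)\mapsto W_1(f)\mapsto\chi(f)W_1(Tf)\mapsto\chi(f)W_{\hbar'}(Tf)$; that is, $\alpha_{\hbar'}$ is precisely the map of Eq. (\ref{eq:weylmorph}) at parameter $\hbar'$, which the earlier part of the argument has already shown extends continuously to a *-homomorphism $\mathcal{W}(V,\hbar'\sigma)\to\mathcal{W}(U,\hbar'\sigma')$. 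Therefore $\alpha_1$ is a smooth, scaling *-homomorphism, hence a morphism in $\mathbf{LinQuant}$.
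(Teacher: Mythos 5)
Your proof is correct. The paper offers no argument for this corollary beyond the footnote deferring to \citep[\S D]{BiHoRi04b}, and your verification --- checking the Weyl relations and the involution on generators using additivity of $T$, $\sigma'(Tf,Tg)=\sigma(f,g)$, and $\overline{\chi(f)}=\chi(-f)$, then extending via the maximal C*-norm, then unwinding the rescaling maps at $\hbar=1$ --- is exactly the ``easy check'' that footnote alludes to, including the observation that bijectivity and $\mathbb{R}$-linearity of $T$ are never used.
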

Hence, quantization defines a functor as follows.

\begin{definition}
The functor $Q_W: \mathbf{LinClass}\to\mathbf{LinQuant}$ is defined by the map on objects
\begin{align}
    \Big(AP(V'),\Delta(V,0)\Big)&\mapsto \Big(\mathcal{W}(V,\hbar\sigma),\Delta(V,\hbar\sigma),\mathcal{Q}_\hbar\Big)_{\hbar\in(0,1]}
\end{align}
where each classical model is associated with its strict deformation quantization via Eq. (\ref{eq:weylquant}), and the map on arrows
\begin{align}
    \Big[\alpha_0: AP(V')\to AP(U')\Big] &\mapsto \Big[\alpha_1: \mathcal{W}(V,\sigma)\to\mathcal{W}(U,\sigma')\Big]
\end{align}
where each morphism is associated with its quantized counterpart via Prop. \ref{prop:linmorph} and Cor. \ref{cor:weylmorph}.
\end{definition}

In the other direction, the classical limit also defines a functor as follows.  Recall that each strict quantization $\big(\mathcal{W}(V,\hbar\sigma),\Delta(V,\hbar\sigma),\mathcal{Q}_\hbar\big)_{\hbar\in (0,1]}$ defines a continuous bundle of C*-algebras with a C*-algebra of continuous sections that we will denote by $\mathcal{W}\big(V,(0,1]\sigma\big)$.  This algebra contains a subalgebra $\Delta\big(V,(0,1]\sigma\big)$ generated by the sections of the form
\begin{align}
    [\hbar\mapsto W_\hbar(f)]
\end{align}
for fixed $f\in V$ and all $\hbar\in (0,1]$.  Moreover, $\mathcal{W}\big(V,(0,1]\sigma\big)$ contains a closed two-sided ideal $K_0(V) = \{a\in \mathcal{W}(V,(0,1]\sigma)\ |\ \lim_{\hbar\to 0}\norm{\phi_\hbar(a)}_\hbar = 0\}$. 

\begin{definition}
The functor $L_W: \mathbf{LinQuant}\to\mathbf{LinClass}$ is defined on objects by
\begin{align}
    \Big(\mathcal{W}(V,\hbar\sigma),\Delta(V,\hbar\sigma),\mathcal{Q}_\hbar\Big)_{\hbar\in (0,1]}\mapsto \Big(\mathcal{W}\big(V,(0,1]\sigma\big)/K_0(V),\  \Delta\big(V,(0,1]\sigma\big)/K_0(V)\Big)
\end{align}
where each quantum model is associated with its classical limit $\mathcal{W}\big(V,(0,1]\sigma\big)/K_0(V) \cong AP(V')$ via the construction surrounding Eq. (\ref{eq:alglim}), and the map on arrows
\begin{align}
    \Big[\alpha_1: \mathcal{W}(V,\sigma)\to \mathcal{W}(U,\sigma')\Big]\mapsto [\alpha_0: AP(V') \to AP(U')\Big]
\end{align}
where each morphism is associated with its classical limit via the construction surrounding Eqs. (\ref{eq:morphlift})-(\ref{eq:morphlim}).
\end{definition}

With these functors now explicitly defined, we have the following result.

\begin{thm}
\label{thm:wequiv}
The functors
\begin{align}
    Q_W: \mathbf{LinClass}\leftrightarrows\mathbf{LinQuant}: L_W
\end{align}
provide a categorical equivalence.
\end{thm}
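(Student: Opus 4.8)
The plan is to establish the equivalence by showing that $Q_W$ and $L_W$ are quasi-inverse, i.e.\ by producing natural isomorphisms $L_W\circ Q_W\cong\mathrm{Id}_{\mathbf{LinClass}}$ and $Q_W\circ L_W\cong\mathrm{Id}_{\mathbf{LinQuant}}$; equivalently, one shows $Q_W$ is essentially surjective (in fact literally surjective on objects, since every object of $\mathbf{LinQuant}$ is by definition the Weyl strict quantization of some classical object) and fully faithful, with $L_W$ realizing the inverse on hom-sets. Before this I would record the functoriality bookkeeping: that $L_W(\alpha_1)$ is again smooth and Poisson --- invoking \citep[][Prop.~5.5]{StFe21a} together with the fact that the scaling hypothesis guarantees $\alpha_1$ lifts to the section algebra and descends to a unique limit morphism via Eqs.\ (\ref{eq:morphlift})--(\ref{eq:morphlim}) --- that $Q_W(\alpha_0)$ is smooth and scaling by Cor.\ \ref{cor:weylmorph}, and that both assignments preserve identities and composition, which reduces to the observation that the composite of two maps of the form $W_\bullet(f)\mapsto\chi(f)W_\bullet(Tf)$ is again of that form, with data $\big(f\mapsto\chi(f)\chi'(Tf),\, T'\circ T\big)$, uniformly in $\hbar$.

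The technical core is a structure theorem for the arrows of $\mathbf{LinQuant}$, exactly parallel to Prop.\ \ref{prop:linmorph}: every smooth, scaling $*$-homomorphism $\alpha_1\colon\mathcal{W}(V,\sigma)\to\mathcal{W}(U,\sigma')$ has the form $\alpha_1(W_1(f))=\chi(f)W_1(Tf)$ for a character $\chi\colon V\to\mathbb{C}$ and an additive, symplectic, origin-preserving $T\colon V\to U$. By smoothness, $u:=\alpha_1(W_1(f))$ is a unitary lying in $\Delta(U,\sigma')$, so $u=\sum_{j=1}^{n}c_jW_1(u_j)$ with the $u_j\in U$ distinct and the $c_j$ nonzero. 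The crucial lemma is that such a unitary must be a single monomial ($n=1$): expanding $u^*u=W_1(0)$ using $W_1(u_j)^*W_1(u_k)=e^{\frac{i}{2}\sigma'(u_j,u_k)}W_1(u_k-u_j)$ and using linear independence of the Weyl unitaries, one chooses a group homomorphism $\psi\colon\langle u_1,\dots,u_n\rangle\to\mathbb{R}$ separating the $u_j$ (possible because $U$ is torsion-free, so the subgroup is free abelian), and observes that the difference $u_{j_1}-u_{j_0}\neq 0$ attaining $\max_j\psi(u_j)-\min_j\psi(u_j)$ occurs in $u^*u$ with a single, nonzero coefficient unless $n=1$; this also forces $|c_1|=1$. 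Then the homomorphism identity applied to $W_1(f)W_1(g)$, together with linear independence, yields additivity of $T$ and $|\chi|\equiv 1$; finally the scaling condition --- which forces $W_\hbar(f)\mapsto\chi(f)W_\hbar(Tf)$ to respect the Weyl relations of $\mathcal{W}(V,\hbar\sigma)$ for \emph{every} $\hbar\in(0,1]$ --- yields $e^{-\frac{i}{2}(\hbar-1)(\sigma'(Tf,Tg)-\sigma(f,g))}=1$ for all $\hbar\in(0,1]$, hence $\sigma'(Tf,Tg)=\sigma(f,g)$, so $T$ is symplectic and $\chi(f+g)=\chi(f)\chi(g)$; origin preservation is immediate from additivity.

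Granting the structure theorem, both natural isomorphisms follow. For $L_W\circ Q_W$: on objects, $L_W\big(Q_W(AP(V'),\Delta(V,0))\big)=\big(\mathcal{W}(V,(0,1]\sigma)/K_0(V),\,\Delta(V,(0,1]\sigma)/K_0(V)\big)$, canonically isomorphic to $(AP(V'),\Delta(V,0))$ by the construction around Eq.\ (\ref{eq:alglim}); on arrows, unwinding the definitions shows that the limit morphism of the section-lift of $Q_W(\alpha_0)$ sends $W_0(f)\mapsto\chi(f)W_0(Tf)$, i.e.\ returns $\alpha_0$ under this identification, and these components assemble into a natural isomorphism. For $Q_W\circ L_W$: on objects one recovers the original Weyl bundle (uniqueness of the limit algebra, plus uniqueness of the Weyl strict quantization of $AP(V')$); on arrows, given $\alpha_1$ with associated data $(\chi,T)$ from the structure theorem, its classical limit is $W_0(f)\mapsto\chi(f)W_0(Tf)$, and re-quantizing via Prop.\ \ref{prop:linmorph} and Cor.\ \ref{cor:weylmorph} returns $W_1(f)\mapsto\chi(f)W_1(Tf)=\alpha_1(W_1(f))$, hence $\alpha_1$ by density. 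The triangle identities and the naturality of the assembled transformations are then routine.

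I expect the single-monomial lemma inside the structure theorem to be the main obstacle: it is the one point where purely formal manipulation does not suffice and one must exploit the actual algebraic structure of the C*-Weyl algebra --- linear independence of the Weyl unitaries and torsion-freeness of the phase space --- rather than just the $*$-homomorphism and scaling axioms. A secondary wrinkle is ensuring $\alpha_1$ is unital, since a priori $\alpha_1(W_1(0))$ is only a projection $p$ with $\alpha_1$ mapping into $p\mathcal{W}(U,\sigma')p$; here a connectedness argument on the classical-limit side is available, as the classical limit of $p$ is a projection in $AP(U')\cong C(\hat U)$ and $\hat U$, being the Pontryagin dual of the torsion-free group $U$, is connected, so $C(\hat U)$ has no nontrivial projections.
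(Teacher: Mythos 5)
Your proposal is correct and its global skeleton coincides with the paper's: both establish the equivalence by exhibiting natural isomorphisms $\eta\colon 1_{\mathbf{LinClass}}\to L_W\circ Q_W$ and $\phi\colon 1_{\mathbf{LinQuant}}\to Q_W\circ L_W$, with components given by the canonical identification $AP(V')\cong \mathcal{W}(V,(0,1]\sigma)/K_0(V)$ from \citep{StFe21a} and $\phi_V=Q_W(\eta_V)$. Where you genuinely diverge is in how the arrows of $\mathbf{LinQuant}$ are controlled. The paper never states a structure theorem for smooth, scaling $*$-homomorphisms $\alpha_1\colon\mathcal{W}(V,\sigma)\to\mathcal{W}(U,\sigma')$; the commutativity of the $\phi$-naturality square is left as ``one can easily check,'' and the implicit route is to push $\alpha_1$ down to its classical limit $\alpha_0$ (Poisson by \citep[Prop.~5.5]{StFe21a}, smooth because $\alpha_1$ is), apply Prop.~\ref{prop:linmorph} there to extract $(\chi,T)$, and then use linear independence of the $W_0(u_j)$ to conclude that $\alpha_1(W_1(f))$ was the single monomial $\chi(f)W_1(Tf)$ all along. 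You instead prove the monomial form directly at $\hbar=1$, via the separating-homomorphism argument on the free abelian group $\langle u_1,\dots,u_n\rangle$ applied to $u^*u=W_1(0)$, and then extract symplecticity of $T$ and multiplicativity of $\chi$ from the scaling condition by varying $\hbar$. Your route is more self-contained and makes explicit two things the paper elides: that the scaling condition (not the $\hbar=1$ homomorphism identity alone) is what forces $\sigma'(Tf,Tg)=\sigma(f,g)$ rather than merely $\sigma'(Tf,Tg)-\sigma(f,g)\in 4\pi\mathbb{Z}$ absorbed into $\chi$, and that unitality of $\alpha_1$ needs an argument (your connectedness-of-$\hat U$ observation, which does leave the degenerate zero morphism as an edge case that the paper's Prop.~\ref{prop:linmorph} also silently excludes). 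The paper's route is shorter because it reuses the classical-limit machinery already built, at the cost of hiding exactly the verification you identify as the technical core.
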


\begin{proof}
We shall establish the equivalent standard for categorical equivalence provided in \citep[p. 172-3]{Aw10} by defining two natural transformations $\eta: 1_{\mathbf{LinClass}}\to L_W\circ Q_W$ and $\phi: 1_{\mathbf{LinQuant}}\to Q_W\circ L_W$.

To define $\eta$, we recall that it follows from \citep{StFe21a} that for any object $AP(V')$ in $\mathbf{LinClass}$, there is an isomorphism 
\begin{align}
    \eta_V: AP(V')\to \mathcal{W}(V,(0,1]\sigma)/K_0(V) = L_W\circ Q_W(AP(V'))
\end{align}
generated by the linear, continuous extension of
\begin{align}
    \eta_V(W_0(f)) = [\hbar\mapsto \mathcal{Q}_\hbar(W_0(f))] + K_0(V)
\end{align}
for any $f\in V$.  One can easily check that for any arrow $\alpha_0: AP(V')\to AP(U')$ in $\mathbf{LinClass}$, the following diagram commutes:
\begin{equation*}
\begin{tikzcd}
AP(V') \arrow{rr}{\eta_V} \arrow[dd, "\alpha_0"']
& & L_W\circ Q_W(AP(V'))\arrow{dd}{L_W\circ Q_W(\alpha_0)} \\
& & \\
AP(U') \arrow{rr}{\eta_U}
& & L_W\circ Q_W(AP(U'))
\end{tikzcd}
\end{equation*}
\noindent This establishes that $\eta_V$ is a natural isomorphism.

To define $\phi$, we recall that $Q_W(AP(V')) = \mathcal{W}(V,\sigma)$, so we can use the isomorphism
\begin{align}
    \phi_V = Q_W(\eta_V): \mathcal{W}(V,\sigma)\to Q_W\circ L_W(\mathcal{W}(V,\sigma)).
\end{align}
One can easily check that for any arrow $\alpha_1: \mathcal{W}(V,\sigma)\to\mathcal{W}(U,\sigma')$ in $\mathbf{LinQuant}$, the following diagram commutes:
\begin{equation*}
\begin{tikzcd}
\mathcal{W}(V,\sigma)) \arrow{rr}{\phi_V} \arrow[dd, "\alpha_1"']
& & Q_W\circ L_W(\mathcal{W}(V,\sigma))\arrow{dd}{Q_W\circ L_W(\alpha_1)} \\
& & \\
\mathcal{W}(U,\sigma') \arrow{rr}{\phi_U}
& & Q_W\circ L_W(\mathcal{W}(U,\sigma'))
\end{tikzcd}
\end{equation*}
This establishes that $\phi_V$ is a natural isomorphism.
\end{proof}

\noindent This shows that the functors $Q_W$ and $L_W$ provide a one-to-one correspondence between the structure-preserving maps of each model in $\mathbf{LinClass}$ and $\mathbf{LinQuant}$.\footnote{See \citep{BeElYu21} for closely related results on automorphisms of the polynomial Weyl algebras.} Hence, this shows a sense in which, relative to the structure encoded in these choices of categories, classical and quantum models have shared structure, when compared with these choices of functors.

We close this section by emphasizing that the morphisms considered in the categories $\mathbf{LinClass}$ and $\mathbf{LinQuant}$ are significantly constrained by the restriction that they be additive maps between vector spaces, as established in Prop. \ref{prop:linmorph}.  It would be of great interest to extend the categorical equivalence result to a wider class of morphisms.

\section{Rieffel's Quantization for Actions of $\mathbb{R}^d$}
\label{sec:rquant}

While quantization via the Weyl algebra is a prominent example, it has limited applications as well as technical issues.\footnote{For more discussion of issues with the Weyl algebra, see \citep{Gr97,GrNe09,Fe18a,FeMaRoWe19,FeWe19}.} In this section we will consider a quantization prescription for a different algebra.  To do so, we will restrict attention to finite-dimensional phase spaces, so we will lose the generality of the Weyl algebra for representing field theories.  But we will allow ourselves to consider phase spaces that are more generally manifolds and not necessarily linear spaces.

The method of quantization due to Rieffel \citep{Ri89,Ri93} applies to classical systems whose phase space is a manifold $M$ with a diffeomorphic action $\beta$ of the Lie group $\mathbb{R}^d$.  In what follows, we will assume the group $\mathbb{R}^d$ acts freely on $M$.  Furthermore, we assume the Lie group carries a symplectic form $\sigma$ on $\mathbb{R}^d$, which corresponds to an antisymmetric matrix $J_{jk}$ on the vector space $\mathbb{R}^d$, understood as the Lie algebra of the Lie group $\mathbb{R}^d$.  In this case the Poisson *-algebra $C_c^\infty(M)\subseteq C_b(M)$ is the collection of smooth, compactly supported functions on the phase space.  This algebra $C_c^\infty(M)$ is norm dense in the C*-algebra $C_0(M)$ of continuous functions vanishing at infinity on the phase space.  The action of $\mathbb{R}^d$ on $M$ defines an automorphic action $\tau$ of $\mathbb{R}^d$ on $C_0(M)$ by
\begin{align}
\label{eq:action}
    \tau_x(f) = f\circ \beta_x.
\end{align}
The subalgebra $C_c^\infty(M)$ carries a corresponding infinitesimal action of the Lie algebra $\mathbb{R}^d$ by smooth vector fields $\xi_X$ for $X\in \mathbb{R}^d$ given by
\begin{align}
    \xi_X(f) = \frac{\partial}{\partial t}_{|t=0} \tau_{tX}(f)
\end{align}
for all $f\in C_c^\infty(M)$.
The Poisson bracket on $C_c^\infty(M)$ is then defined from the infinitesimal action of the Lie algebra and the symplectic form $\sigma$ for all $f,g\in C_c^\infty(M)$ by
\begin{align}
\label{eq:poissbrack}
    \{f,g\} = \sum_{j,k}J_{jk} \xi_{X_j}(f)\xi_{X_k}(g),
\end{align}
where the vectors $\{X_k\}_{k=1}^d$ form a basis for the Lie algebra $\mathbb{R}^d$.  This definition does not depend on the chosen basis, and one can check that it satisfies the conditions of a Poisson bracket on $M$.  This structure specifies the classical model.\footnote{The methods developed by Rieffel \citep{Ri89} for quantization apply much more broadly, even to deforming products on non-commutative C*-algebras carrying actions of $\mathbb{R}^d$.  The methods have been further generalized by Landsman \citep{La93b,La93a,La98b,La99} to cases where the construction is employed locally, including Riemannian manifolds, principal bundles, and Lie groupoids.  Also, Bieliavsky and Gayral \citep{BiGa15} have provided a generalization of the quantization prescription for a much wider class of group actions.}

The corresponding quantum model is obtained by deforming the product on $C_c^\infty(M)$.  Define $\mathcal{P}_\hbar(M)$ to be the vector space with involution $C_c^\infty(M)$ with the new multiplication operation $\star_\hbar$, sometimes called the Moyal product, defined by
\begin{align}
\label{eq:defprod}
    f_\hbar \star_\hbar g_\hbar = \int_{\mathbb{R}^d}\int_{\mathbb{R}^d} \tau_x(f)\tau_y(g)e^{i\hbar\sigma(x,y)} 
\end{align}
where we use the notation $f_\hbar,g_\hbar\in\mathcal{P}_\hbar(M)$ only to distinguish these from the identical elements $f,g\in C_c^\infty(M)$.  Rieffel \citep{Ri93} shows that this expression can be made well-defined in terms of oscillatory integrals, and that one can define a C*-norm on $\mathcal{P}_\hbar(M)$ so that the completion 
\begin{align*}
    \mathfrak{A}_\hbar(M) = \overline{\mathcal{P}_\hbar(M)}
\end{align*}
with respect to this norm is a C*-algebra.  We note that each C*-algebra $\mathfrak{A}_\hbar(M)$ also carries a strongly continuous group action of $\mathbb{R}^d$, which we denote by $\tau^\hbar$, picked out as the unique continuous extension of the group action $\tau$ on $\mathcal{P}_\hbar$ \citep[See][Thm. 5.11, p. 44]{Ri93}.  Likewise, it follows from \citep[][Thm. 7.1]{Ri93} that there is an infinitesimal action of the Lie algebra, which we denote by $\xi^\hbar$ on the subalgebra $\mathcal{P}_\hbar(M)$.

For example, in the special case where $M = \mathbb{R}^{2n}$ with the group action $\tau$ for $d = 2n$ by translations, we have $\mathfrak{A}_\hbar(\mathbb{R}^{2n}) \cong \mathcal{K}(L^2(\mathbb{R}^n))$.  We can also understand this algebra through the standard Schr\"{o}dinger representation of $\mathfrak{A}_\hbar(\mathbb{R}^{2n})$ on $L^2(\mathbb{R}^n)$, which we now denote $\tilde{\pi}_S$, given by the continuous extension of
\begin{align}
    \tilde{\pi}_S(f_\hbar) = \int_{\mathbb{R}^{2n}} \frac{d^nad^nb}{(2\pi)^n}(\mathcal{F}f)(a,b)\pi_S(W_\hbar(a,b))
\end{align}
for $f_\hbar\in \mathcal{P}_\hbar(\mathbb{R}^{2n})$.  Here, $\mathcal{F}f$ denotes the Fourier transform of the function $f\in C_c^\infty(\mathbb{R}^{2n})$ and $\pi_S(W_\hbar(a,b))$ is the Schr\"{o}dinger representation of the element $W_\hbar(a,b)$ in the Weyl algebra $\mathcal{W}(\mathbb{R}^{2n},\sigma)$ as given by Eq. (\ref{eq:srep}).  

The quantization maps $\mathcal{Q}_\hbar: C_c^\infty(M)\to \mathfrak{A}_\hbar(M)$ are given for $\hbar\in [0,1]$ by
\begin{align}
\label{eq:rquantmap}
    \mathcal{Q}_\hbar(f) = f_\hbar
\end{align}
for all $f\in C_c^\infty(M)$.  These quantization maps define a strict deformation quantization on $M$ with $\mathcal{P} = C_c^\infty(M)$ and fiber algebras $\mathfrak{A}_\hbar(M)$ for $\hbar>0$.

To define our categories of classical and quantum models suitable for Rieffel quantization, we will need to specify when a morphism of a C*-algebra (either $C_0(M)$ or $\mathfrak{A}_\hbar(M)$) is compatible with a group action.  Suppose we have a *-homomorphism $\alpha: \mathfrak{A}\to\mathfrak{B}$ between two C*-algebras $\mathfrak{A}$ and $\mathfrak{B}$ carrying group actions by $\mathbb{R}^d$ and $\mathbb{R}^{d'}$, respectively.  We now denote the infinitesimal action of the Lie algebra by $\xi$ (corresponding to the action $\xi$ or $\xi^\hbar$, as above.)  We will call $\alpha$ \emph{compatible with the group actions} if for each $X\in\mathbb{R}^{d'}$, there is a $Y\in \mathbb{R}^d$ such that $\xi_X\circ\alpha = \alpha\circ \xi_Y$ on the domain of $\xi_X$ and $\xi_Y$ (i.e., on $C_c^\infty(M)$ or $\mathcal{P}_\hbar$).

Now we define a category of classical models suitable for Rieffel quantization.

\begin{definition}
We denote the following category by $\mathbf{RClass}$: \begin{itemize}
\item \emph{Objects} are triples $(C_0(M), C^\infty_c(M),\tau)$, where $C_0(M)$ is the C*-algebra of continuous functions vanishing at infinity on a manifold $M$, and $C_c^\infty(M)$ is a dense *-subalgebra.  Further $\tau$ is a strongly continuous, free action of $\mathbb{R}^d$ on $C_0(M)$ arising from a family of diffeomorphisms by Eq. (\ref{eq:action}), which defines the Poisson bracket on $C_c^\infty(M)$ by Eq. (\ref{eq:poissbrack}).
\item \emph{Arrows} are *-homomorphisms $\alpha_0: C_0(M)\to C_0(N)$ for manifolds $M$ and $N$ that are compatible with the group actions, that are \emph{smooth} in the sense that 
\begin{align}
    \alpha_0\big[C_c^\infty(M)\big]\subseteq C_c^\infty(N)
\end{align}
and that are \emph{Poisson} in the sense that
\begin{align}
    \alpha_0\big(\{A,B\}_M\big) = \big\{\alpha_0(A),\alpha_0(B)\big\}_N
\end{align}
for all $A,B\in C_c^\infty(M)$.
\end{itemize}
\end{definition}
\noindent Note that this category is general enough to include non-linear phase spaces.  The morphisms in this category preserve the Poisson structure of classical models at $\hbar = 0$ as phase spaces.

Similarly, we define a category of quantum models corresponding to Rieffel's quantization prescription

\begin{definition}
We denote the following category by $\mathbf{RQuant}$: \begin{itemize}
\item \emph{Objects} are strict quantizations $\big(\mathfrak{A}_\hbar(M), \mathcal{P}_\hbar(M),\mathcal{Q}_\hbar, \tau^\hbar\big)_{\hbar\in (0,1]}$ of $\mathcal{P} = C_c^\infty(M)$, as given by the discussion around Eq. (\ref{eq:defprod}) and (\ref{eq:rquantmap}).
\item \emph{Arrows} are smooth, scaling *-homomorphisms $\alpha_1: \mathfrak{A}_1(M)\to\mathfrak{A}_1(N)$ that are compatible with the group actions, where $\mathfrak{A}_1(M)$ and $\mathfrak{A}_1(N)$ are the C*-algebras at $\hbar = 1$ obtained as the Rieffel quantizations of $C_c^\infty(M)$ and $C_c^\infty(N)$, respectively.
\end{itemize}
\end{definition}
\noindent The morphisms in this category preserve the structure of the fully quantized models as non-commutative C*-algebras of operators at $\hbar = 1$.

The following proposition characterizing classical morphisms in $\mathbf{RClass}$ is essential for the definition of the quantization functor.

\begin{prop}
\label{prop:smoothmorph}
If $\alpha_0: C_0(M)\to C_0(N)$ is a morphism in $\mathbf{RClass}$ for manifolds $M$ and $N$ with actions of $\mathbb{R}^d$ and $\mathbb{R}^{d'}$, respectively, then there is a smooth map $\varphi: N\to M$ such that
\begin{align}
    \alpha_0(f) = f\circ\varphi
\end{align}
for each $f\in C_0(M)$, and a symplectic map $S: \mathbb{R}^{d'}\to\mathbb{R}^d$ such that
\begin{align}
\label{eq:equivariance}
    \alpha_0\circ\tau_{SX} = \tau_X\circ\alpha_0
\end{align}
for all $X\in \mathbb{R}^{d'}$.
\end{prop}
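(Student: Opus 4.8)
The plan is to extract the geometric map $\varphi$ from $\alpha_0$ purely as a *-homomorphism of commutative C*-algebras, and then use the smoothness and compatibility hypotheses to upgrade $\varphi$ to a smooth map and to produce the symplectic linear map $S$. First I would invoke Gelfand duality: since $C_0(M)$ and $C_0(N)$ are commutative C*-algebras with spectra $M$ and $N$ (here one must be slightly careful — a general *-homomorphism $C_0(M)\to C_0(N)$ is dual to a continuous \emph{proper} map $N\to M$ only when $\alpha_0$ is nondegenerate; but the smoothness condition $\alpha_0[C_c^\infty(M)]\subseteq C_c^\infty(N)$ together with density forces $\alpha_0$ to be nondegenerate and the dual map to be proper), so there is a continuous proper $\varphi:N\to M$ with $\alpha_0(f)=f\circ\varphi$ for all $f\in C_0(M)$.

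Next I would show $\varphi$ is smooth. This is where the smoothness hypothesis does its work: for every $f\in C_c^\infty(M)$ we have $f\circ\varphi\in C_c^\infty(N)$. Choosing local coordinates on $M$ and plugging the coordinate functions (cut off by bump functions to lie in $C_c^\infty(M)$) into $\alpha_0$ shows that the components of $\varphi$ are smooth in any chart, hence $\varphi$ is a smooth map $N\to M$.

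Then I would produce $S$. The compatibility-with-group-actions hypothesis says that for each $X\in\mathbb{R}^{d'}$ there is $Y\in\mathbb{R}^d$ with $\xi_X\circ\alpha_0=\alpha_0\circ\xi_Y$ on $C_c^\infty(M)$. Writing $\xi_Y(f)=$ the derivative along the $\mathbb{R}^d$-flow and using $\alpha_0(f)=f\circ\varphi$, this identity becomes a relation between the pushforward of the generating vector field of $X$ on $N$ and that of $Y$ on $M$ under $\varphi$; since the action is free, $Y$ is uniquely determined by $X$, and the assignment $X\mapsto Y$ is visibly linear (the $\xi$'s depend linearly on the Lie-algebra element and $\alpha_0$ is fixed), giving a linear map $S:\mathbb{R}^{d'}\to\mathbb{R}^d$ with $\xi_X\circ\alpha_0 = \alpha_0\circ\xi_{SX}$, which integrates to the equivariance relation $\alpha_0\circ\tau_{SX}=\tau_X\circ\alpha_0$ (equivalently $\varphi\circ\beta^N_X = \beta^M_{SX}\circ\varphi$ on the manifolds). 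Finally, that $S$ is symplectic for $\sigma$ on $\mathbb{R}^{d'}$ and $\sigma$ on $\mathbb{R}^d$ follows from the Poisson condition on $\alpha_0$: expanding $\alpha_0(\{f,g\}_M)=\{\alpha_0 f,\alpha_0 g\}_N$ using Eq.~(\ref{eq:poissbrack}) and the intertwining of infinitesimal actions, one gets $\sum J^M_{jk}\,(\xi_{X_j}f\circ\varphi)(\xi_{X_k}g\circ\varphi) = \sum J^N_{lm}\,\xi_{X'_l}(f\circ\varphi)\xi_{X'_m}(g\circ\varphi)$; rewriting the left side via $S$ and comparing the antisymmetric bilinear forms (using that, as $f,g$ range over $C_c^\infty$, the differentials $\xi_{X_j}(f\circ\varphi)$ separate enough directions because the action is free) yields $J^N = S^{T} J^M S$, i.e.\ $\sigma(SX,SX') = \sigma(X,X')$.

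The main obstacle is the bookkeeping in the last step: one must be careful that the freeness of the $\mathbb{R}^d$-action on $M$ (and of the $\mathbb{R}^{d'}$-action on $N$) is genuinely used to conclude both the \emph{uniqueness} of $S$ and the \emph{nondegeneracy} needed to cancel the test functions and read off $J^N=S^TJ^MS$ pointwise; without freeness the generating vector fields could vanish somewhere and the comparison of Poisson brackets would only constrain $S$ on a subspace. A secondary technical point is justifying that $\alpha_0$ is nondegenerate/proper — I would handle this by noting that if $\alpha_0$ were degenerate its image would lie in a proper ideal $C_0(U)$ for some open $U\subsetneq N$, contradicting that $\alpha_0[C_c^\infty(M)]$ is dense (which is implicit in $\alpha_0$ being a morphism of the relevant objects), but this should be stated carefully or assumed as part of the setup.
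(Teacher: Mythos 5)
Your proposal is correct and follows essentially the same route as the paper's proof: Gelfand duality to obtain $\varphi$, the pullback characterization of smoothness (via cutoffs to reduce to $C_c^\infty(M)$, i.e.\ the ``Milnor's exercise'' argument) to show $\varphi$ is smooth, and then the compatibility-with-group-actions hypothesis plus freeness to define a unique linear $S$ whose symplecticity follows from the Poisson condition and Eq.~(\ref{eq:poissbrack}). Your added care about nondegeneracy/properness of $\alpha_0$ and about where freeness is genuinely needed goes slightly beyond what the paper makes explicit, but it does not change the argument.
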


\begin{proof}
It follows from Gelfand duality \citep[][\S C.2-3]{La17} that there is a continuous map $\varphi: N\to M$ such that
\begin{align}
\alpha_0(f) = f\circ \varphi
\end{align}
for all $f\in C_0(M)$.

To show that $\varphi$ is smooth, we claim that whenever $f\in C^\infty(M)$, it follows that $f\circ\varphi\in C^\infty(N)$.  To show this, we suppose $f\in C^\infty(M)$ and aim to show that $f\circ \varphi$ is smooth in an open neighborhood of each $p\in N$.  For any such $p\in N$, we know there is some open neighborhood $O$ of $\varphi(p)\in M$ and some function $\tilde{f}\in C_c^\infty(M)$ such that $\tilde{f}_{|O} = f_{|O}$.  Therefore, since $\alpha_0$ is smooth, we have that $\tilde{f}\circ\varphi = \alpha_0(\tilde{f})\in C_c^\infty(N)$.  Moreover, it follows that $\varphi^{-1}[O]$ is an open neighborhood of $p\in N$ and $(\tilde{f}\circ\varphi)_{|\varphi^{-1}[O]} = (f\circ\varphi)_{|\varphi^{-1}[O]}$.  Hence, $f\circ\varphi$ is smooth in an open neighborhood of $p$, and since $p\in N$ was arbitrary, it follows that $f\circ\varphi$ is smooth on $N$.  Finally, it follows from the fact that $f\in C^\infty(M)$ was arbitrary that $\varphi$ is smooth.\footnote{This is a small extension of the well-known fact known as ``Milnor's exercise" \citep[Cor. 35.10]{KoMiSl93}.  We have simply extended the correspondence between algebra homomorphisms and smooth maps from algebras of the type $C^\infty(M)$ to algebras of the type $C_c^\infty(M)$ for a manifold $M$.}

To show that $\varphi$ corresponds to a symplectic map $S:\mathbb{R}^{d'}\to\mathbb{R}^d$, we draw on the fact that $\alpha_0$ is Poisson, meaning that
\begin{align}
    \alpha_0\big(\{f,g\}_M\big) = \big\{\alpha_0(f),\alpha_0(g)\big\}_N
\end{align}
for all $f,g\in C_c^\infty(M)$, which implies
\begin{align}
    \{f,g\}_M\circ\varphi = \{f\circ\varphi,g\circ\varphi\}_N.
\end{align}
Given $X\in \mathbb{R}^{d'}$, since $\alpha_0$ is compatible with the group actions, there is an element $SX\in \mathbb{R}^{d}$ satisfying
\begin{align}
\label{eq:comp}
\xi_X\circ\alpha_0 = \alpha_0\circ\xi_{SX}.
\end{align}
Moreover the value $SX\in\mathbb{R}^d$ in Eq. (\ref{eq:comp}) is unique since the group actions are free.  So this can be rewritten as a local condition that at each $p\in N$, where we denote the differential of $\varphi$ at $p\in N$ by $d\varphi_p: T_pN\to T_{\varphi(p)}M$.  We have that for each $f\in C_c^\infty(M)$,
\begin{align}
    d\varphi_p\big((\xi_X)_{|p}\big)(f) &= (\xi_X)_{|p}(f\circ\varphi)= (\xi_X)_{|p}(\alpha_0(f)) = \big((\xi_X\circ\alpha_0)(f)\big)(p)\\
    &= \big((\alpha_0\circ \xi_{SX})(f)\big)(p) = \big((\xi_{SX})(f)\big)\big(\varphi(p)\big) = (\xi_{SX})_{|\varphi(p)}(f).
\end{align}
In other words, we have that $\xi_X$ and $\xi_{SX}$ are $\varphi$-related.  This implies that we have defined a linear map $S: \mathbb{R}^{d'}\to \mathbb{R}^d$ for all $X\in\mathbb{R}^{d'}$.  The fact that $S$ is linear follows from the linearity of the differential $d\varphi_p$ at each point $p\in N$ and the linearity of the map $X\mapsto \xi_X$.  It now follows from the definition of the Poisson bracket that $S$ is symplectic.  Moreover, it follows from the compatibility with the group actions that
\begin{align}
    \alpha_0\circ\tau_{SX} = \tau_X \circ \alpha_0,
\end{align}
as desired.
\end{proof}

This proposition vindicates the earlier remark that morphisms in $\mathbf{RClass}$ preserve the structure of classical phase spaces.  It also provides a way to lift a morphism to any value $\hbar>0$, as follows.  Since the map $\varphi$ corresponding to $\alpha_0$ is a smooth map, it follows from the compatibility of $\alpha_0$ with the group action that the pushforward $\varphi_*$ lifts to a symplectic transformation $S: \mathbb{R}^{d'}\to\mathbb{R}^{d}$.  It then follows from Prop. 10.4 of \citep[][p. 70]{Ri93} that each morphism $\alpha_0$ can be quantized.

\begin{cor}
\label{cor:quantmorph}
If $\alpha_0: C_0(M)\to C_0(N)$ is a morphism in $\mathbf{RClass}$ for manifolds $M$ and $N$, then the restriction of $\alpha_0$ to $\mathcal{P}_\hbar(M) = C_c^\infty(M)$ continuously extends to a *-homomorphism $\alpha_\hbar: \mathfrak{A}_\hbar(M)\to\mathfrak{A}_\hbar(N)$ for each $\hbar>0$.  Moreover, the map $\alpha_1$ so defined is smooth and scaling, and thus is a morphism in $\mathbf{RQuant}$.\footnote{The fact that $\alpha_1$, so defined, is compatible with the group action follows from \citep[][Thm. 7.1]{Ri93}.}
\end{cor}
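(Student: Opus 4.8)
The plan is to produce $\alpha_\hbar$ from the geometric data supplied by Prop.~\ref{prop:smoothmorph} and then verify the clauses of Def.~\ref{def:morph} by exploiting the fact that the Rieffel quantization maps $\mathcal{Q}_\hbar$ act as the identity on the underlying vector space $C_c^\infty(M)$. First I would apply Prop.~\ref{prop:smoothmorph} to write $\alpha_0(f) = f\circ\varphi$ for a smooth map $\varphi:N\to M$ and to obtain a symplectic linear map $S:\mathbb{R}^{d'}\to\mathbb{R}^d$ with $\alpha_0\circ\tau_{SX} = \tau_X\circ\alpha_0$ for all $X\in\mathbb{R}^{d'}$. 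Restricting $\alpha_0$ to $\mathcal{P}_\hbar(M)$ then gives a linear, involution-preserving map into $\mathcal{P}_\hbar(N) = C_c^\infty(N)$ (by the smoothness clause defining $\mathbf{RClass}$) that intertwines the two group actions along $S$. Because $S$ is symplectic---so that $\sigma'(X,Y) = \sigma(SX,SY)$---this equivariant map carries the Moyal product (\ref{eq:defprod}) on $M$ to the Moyal product on $N$; this is exactly the functoriality statement \citep[Prop.~10.4]{Ri93}, which in addition provides the norm estimate showing that the restriction of $\alpha_0$ is norm-decreasing for the C*-norms, hence extends continuously to a $*$-homomorphism $\alpha_\hbar:\mathfrak{A}_\hbar(M)\to\mathfrak{A}_\hbar(N)$ for each $\hbar>0$.

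Granting $\alpha_\hbar$, the remaining conditions should follow quickly. Smoothness of $\alpha_1$ holds because $\mathcal{Q}_1[\mathcal{P}] = C_c^\infty(M)$ and $\mathcal{Q}'_1[\mathcal{P}'] = C_c^\infty(N)$ by (\ref{eq:rquantmap}), so $\alpha_1[\mathcal{Q}_1[\mathcal{P}]] = \alpha_0[C_c^\infty(M)]\subseteq C_c^\infty(N) = \mathcal{Q}'_1[\mathcal{P}']$. For the scaling condition, observe that since each $\mathcal{Q}_\hbar$ is the set-theoretic identity on $C_c^\infty$, the rescaling maps $R^{\mathcal{Q}}_{\hbar\to\hbar'} = \mathcal{Q}_{\hbar'}\circ(\mathcal{Q}_\hbar)^{-1}$ are set-theoretic identities as well; hence $R^{\mathcal{Q}'}_{1\to\hbar'}\circ\alpha_1\circ R^{\mathcal{Q}}_{\hbar'\to 1}$ is just $\alpha_0$ restricted to $C_c^\infty(M)$, now viewed as a map $\mathcal{P}_{\hbar'}(M)\to\mathcal{P}_{\hbar'}(N)$, which by the argument of the first paragraph (applied at $\hbar'$ in place of $\hbar$) extends continuously to a $*$-homomorphism $\mathfrak{A}_{\hbar'}(M)\to\mathfrak{A}_{\hbar'}(N)$---precisely what the scaling condition demands. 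Finally, compatibility of $\alpha_1$ with the group actions follows because, by \citep[Thm.~7.1]{Ri93}, the infinitesimal action $\xi^\hbar$ is the continuous extension of $\xi$; since $\alpha_0$ already satisfies $\xi_X\circ\alpha_0 = \alpha_0\circ\xi_{SX}$ on $C_c^\infty(M)$, this intertwining passes by continuity to $\alpha_\hbar$ on $\mathcal{P}_\hbar$, so for each $X\in\mathbb{R}^{d'}$ the element $Y = SX\in\mathbb{R}^d$ witnesses compatibility.

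I expect the single nontrivial point to be the careful invocation of \citep[Prop.~10.4]{Ri93}: one must check that the equivariance along the symplectic map $S$ established in Prop.~\ref{prop:smoothmorph} is exactly the hypothesis under which Rieffel's deformed products behave functorially, and one must track how the deformation of $C_0(M)$ by the full $\mathbb{R}^d$-action relates, through $S$, to the deformation of $C_0(N)$---in particular when $d\neq d'$, so that $S$ need not be an isomorphism and the identification of the two Moyal products rests on Rieffel's oscillatory-integral estimates rather than a naive change of variables. Everything else---smoothness, the scaling condition, and the $*$-algebra axioms for $\alpha_\hbar$---is a routine consequence of the fact that Rieffel's quantization maps do not disturb the underlying vector space or the involution.
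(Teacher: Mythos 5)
Your proposal is correct and follows essentially the same route as the paper, which proves the corollary only by the remark preceding it: extract $\varphi$ and the symplectic map $S$ from Prop.~\ref{prop:smoothmorph}, invoke Rieffel's Prop.~10.4 to quantize the equivariant morphism, and note that compatibility with the group action follows from Rieffel's Thm.~7.1. Your additional observations---that smoothness and scaling are automatic because the quantization maps $\mathcal{Q}_\hbar$ are set-theoretic identities on $C_c^\infty$, and that the case $d\neq d'$ is where the invocation of Rieffel's functoriality needs care---are correct and in fact supply detail the paper leaves implicit.
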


Hence, quantization defines a functor as follows.

\begin{definition}
The functor $Q_R: \mathbf{RClass}\to\mathbf{RQuant}$ is defined on objects by
\begin{align}
    \Big(C_0(M), C_c^\infty(M),\tau\Big) \mapsto \Big(\mathfrak{A}_\hbar(M),\mathcal{P}_\hbar(M),\mathcal{Q}_\hbar,\tau^\hbar\Big)_{\hbar\in (0,1]}
\end{align}
where each classical model is associated with its strict deformation quantization via Eq. (\ref{eq:rquantmap}), and the map on arrows
\begin{align}
    \Big[\alpha_0: C_0(M)\to C_0(N)\Big]\mapsto \Big[\alpha_1: \mathfrak{A}_1(M)\to\mathfrak{A}_1(N)\Big]
\end{align}
where each morphism is associated with its quantized counterpart via Prop. \ref{prop:smoothmorph} and Cor. \ref{cor:quantmorph}.
\end{definition}

In the other direction, the classical limit also defines a functor as follows.  Recall that each strict quantization $\big(\mathfrak{A}_\hbar(M),\mathcal{P}_\hbar(M),\mathcal{Q}_\hbar\big)_{\hbar\in (0,1]}$ defines a continuous bundle of C*-algebras with a C*-algebra of continuous sections that we will denote by $\mathfrak{A}_{(0,1]}(M)$.  The algebra contains a subalgebra $\mathcal{P}_{(0,1]}(M)$ generated by the sections of the form
\begin{align}
    [\hbar\mapsto f_\hbar]
\end{align}
for fixed $f\in C_c^\infty(M)$ and all $\hbar\in (0,1]$.  Moreover, $\mathfrak{A}_{(0,1]}(M)$ contains a closed two-sided ideal $K_0(M) = \{a\in\mathfrak{A}_{(0,1]}(M)\ | \lim_{\hbar\to 0}\norm{\phi_\hbar(a)}_\hbar = 0\}$.

\begin{definition}
The functor $L_R:\mathbf{RQuant}\to\mathbf{RClass}$ is defined on objects by
\begin{align}
    \Big(\mathfrak{A}_\hbar(M),\mathcal{P}_\hbar(M),\mathcal{Q}_\hbar,\tau^\hbar\Big)_{\hbar\in (0,1]}\mapsto \Big(\mathfrak{A}_{(0,1]}(M)/K_0(M),\mathcal{P}_{(0,1]}(M)/K_0(M),\tau^0\Big)
\end{align}
where each quantum model is associated with its classical limit $\mathfrak{A}_{(0,1]}(M)/K_0(M)\cong C_0(M)$ via the construction surrounding Eq. (\ref{eq:alglim}) and where $\tau^0$ is the classical limit of the scaling morphism $\tau^1$, which is specified in the same way the functor maps all arrows.  The functor acts on arrows by
\begin{align}
    \Big[\alpha_1: \mathfrak{A}_1(M)\to\mathfrak{A}_1(N)\Big]\mapsto\Big[\alpha_0: C_0(M)\to C_0(N)\Big]
\end{align}
where each morphism is associated with its classical limit via the construction surrounding Eqs. (\ref{eq:morphlift})-(\ref{eq:morphlim}). 
 Steeger and Feintzeig \citep[][Prop. 5.5]{StFe21a} establish that $\alpha_0$ so defined is Poisson, and hence is a morphism in $\mathbf{RClass}$.\footnote{The fact that $\alpha_0$, so defined, is compatible with the group action follows from \citep[][Thm. 7.1]{Ri93}.}
\end{definition}

With these functors now explicitly defined, we have the following result.

\begin{thm}
\label{thm:requiv}
The functors
\begin{align}
    Q_R:\mathbf{RClass}\leftrightarrows\mathbf{RQuant}: L_R
\end{align}
provide a categorical equivalence.
\end{thm}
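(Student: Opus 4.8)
The plan is to mirror the proof of Theorem~\ref{thm:wequiv}, appealing to the criterion for categorical equivalence in \citep[p.~172-3]{Aw10}: it suffices to produce natural isomorphisms $\eta: 1_{\mathbf{RClass}}\to L_R\circ Q_R$ and $\phi: 1_{\mathbf{RQuant}}\to Q_R\circ L_R$. For $\eta$, I would invoke the reconstruction result of Steeger and Feintzeig \citep{StFe21a} to obtain, for each object $(C_0(M),C_c^\infty(M),\tau)$, an isomorphism
\begin{align*}
\eta_M: C_0(M)\to \mathfrak{A}_{(0,1]}(M)/K_0(M) = L_R\circ Q_R(C_0(M))
\end{align*}
generated by the continuous, linear extension of $\eta_M(f) = [\hbar\mapsto \mathcal{Q}_\hbar(f)] + K_0(M)$ for $f\in C_c^\infty(M)$. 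The first task is to check that $\eta_M$ is actually an arrow in $\mathbf{RClass}$: it carries $C_c^\infty(M)$ onto $\mathcal{P}_{(0,1]}(M)/K_0(M)$ by construction, it intertwines the Poisson brackets by Dirac's condition, and it intertwines $\tau$ with $\tau^0$ because $\tau^0$ is \emph{defined} as the classical limit of the scaling morphism $\tau^1$, which is in turn the Rieffel quantization of $\tau$, so that quantizing the action, lifting it to the section algebra, and passing to the $\hbar\to 0$ quotient compose to the identity on $\tau$.

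Next I would verify that $\eta$ is natural: for any arrow $\alpha_0: C_0(M)\to C_0(N)$ in $\mathbf{RClass}$, the square
\begin{equation*}
\begin{tikzcd}
C_0(M) \arrow{rr}{\eta_M} \arrow[dd, "\alpha_0"']
& & L_R\circ Q_R(C_0(M))\arrow{dd}{L_R\circ Q_R(\alpha_0)} \\
& & \\
C_0(N) \arrow{rr}{\eta_N}
& & L_R\circ Q_R(C_0(N))
\end{tikzcd}
\end{equation*}
commutes. By density and continuity it is enough to chase $f\in C_c^\infty(M)$. Going down then across gives the section class of $[\hbar\mapsto \mathcal{Q}_\hbar(\alpha_0(f))]$ modulo $K_0(N)$; going across then down applies $L_R\circ Q_R(\alpha_0)$, which by the definitions of $Q_R$ and $L_R$ on arrows is the limit morphism obtained by lifting $(\alpha_\hbar)_{\hbar\in(0,1]}$ to the section algebra, and since $\alpha_\hbar\circ\mathcal{Q}_\hbar = \mathcal{Q}'_\hbar\circ\alpha_0$ on $C_c^\infty(M)$ by Corollary~\ref{cor:quantmorph}, this yields the same class $[\hbar\mapsto \mathcal{Q}'_\hbar(\alpha_0(f))] + K_0(N)$. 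Hence $\eta$ is a natural isomorphism.

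For $\phi$, since $Q_R(C_0(M)) = \mathfrak{A}_1(M)$ I would take $\phi_M = Q_R(\eta_M): \mathfrak{A}_1(M)\to Q_R\circ L_R(\mathfrak{A}_1(M))$; as $\eta_M$ is an isomorphism in $\mathbf{RClass}$ and $Q_R$ is a functor, $\phi_M$ is an isomorphism in $\mathbf{RQuant}$, and naturality of $\phi$ is immediate from applying $Q_R$ to the naturality square for $\eta$. The step I expect to be the main obstacle is the object-level bookkeeping: confirming that $L_R\circ Q_R$ returns not merely the C*-algebra $C_0(M)$ but the full object of $\mathbf{RClass}$---smooth subalgebra, Poisson bracket, and most delicately the free $\mathbb{R}^d$-action $\tau$---and dually that $Q_R\circ L_R$ recovers the deformed product and the action $\tau^\hbar$ on the fibers. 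These identifications hinge on the compatibility of Rieffel's construction with the lifting and classical-limit procedures of \citep{StFe21a}, together with \citep[][Thm.~7.1]{Ri93} controlling the Lie-algebra action through the deformation; granting these, the naturality diagrams reduce to routine chases on the dense subalgebras followed by a continuity argument.
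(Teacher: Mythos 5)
Your proposal is correct and follows essentially the same route as the paper's own proof: both establish the equivalence via the criterion of Awodey by exhibiting the natural isomorphism $\eta_M$ from the Steeger--Feintzeig reconstruction and then setting $\phi_M = Q_R(\eta_M)$. In fact you supply more detail than the paper does (the explicit diagram chase on $C_c^\infty(M)$ using $\alpha_\hbar\circ\mathcal{Q}_\hbar = \mathcal{Q}'_\hbar\circ\alpha_0$, and the verification that the group action $\tau$ is recovered), which the paper leaves as ``one can easily check.''
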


\begin{proof}
The proof proceeds exactly as for Thm. \ref{thm:wequiv}.  We shall establish the equivalent standard for categorical equivalence provided in \citep[p. 172-3]{Aw10} by defining two natural transformations $\eta: 1_{\mathbf{RClass}}\to L_R\circ Q_R$ and $\phi: 1_{\mathbf{RQuant}}\to Q_R\circ L_R$.

To define $\eta$, we recall that it follows from \citep{StFe21a} that for any object $C_0(M)$ in $\mathbf{RQuant}$, there is an isomorphism 
\begin{align}
    \eta_M: C_0(M)\to \mathfrak{A}_{(0,1]}(M)/K_0(M) = L_R\circ Q_R(C_0(M))
\end{align}
generated by the linear, continuous extension of
\begin{align}
    \eta_V(f) = [\hbar\mapsto \mathcal{Q}_\hbar(f)] + K_0(M)
\end{align}
for any $f\in C_c^\infty(M)$.  One can easily check that for any arrow $\alpha_0: C_0(M)\to C_0(N)$ in $\mathbf{RClass}$, the following diagram commutes:
\begin{equation*}
\begin{tikzcd}
C_0(M) \arrow{rr}{\eta_M} \arrow[dd, "\alpha_0"']
& & L_R\circ Q_R(C_0(N))\arrow{dd}{L_W\circ Q_W(\alpha_0)} \\
& & \\
C_0(N) \arrow{rr}{\eta_N}
& & L_R\circ Q_R(C_0(N))
\end{tikzcd}
\end{equation*}
\noindent This establishes that $\eta_V$ is a natural isomorphism.

To define $\phi$, we recall that $Q_R(C_0(M)) = \mathfrak{A}_1(M)$, so we can use the isomorphism
\begin{align}
    \phi_M = Q_R(\eta_M): \mathfrak{A}_1(M)\to Q_R\circ L_R(\mathfrak{A}_1(M)).
\end{align}
One can easily check that for any arrow $\alpha_1: \mathfrak{A}_1(M)\to\mathfrak{A}_1(N)$ in $\mathbf{RQuant}$, the following diagram commutes:
\begin{equation*}
\begin{tikzcd}
\mathfrak{A}_1(M) \arrow{rr}{\phi_M} \arrow[dd, "\alpha_1"']
& & Q_R\circ L_R(\mathfrak{A}_1(M))\arrow{dd}{Q_R\circ L_R(\alpha_1)} \\
& & \\
\mathfrak{A}_1(N) \arrow{rr}{\phi_N}
& & Q_R\circ L_R(\mathfrak{A}_1(N))
\end{tikzcd}
\end{equation*}
This establishes that $\phi_V$ is a natural isomorphism.
\end{proof}

\noindent Just as in the previous section, this shows that the functors $Q_R$ and $L_R$ provide a one-to-one correpsondence between the structure-preserving maps of each model in $\mathbf{RClass}$ and $\mathbf{RQuant}$.  Hence, this shows a sense in which, relative to the structure encoded in these choices of categories, classical and quantum models share structure, when compared with these choices of functors.

We close this section by emphasizing that the morphisms considered in the categories $\mathbf{RClass}$ and $\mathbf{RQuant}$ are significantly constrained by the restriction that they satisfy the condition of Eq. (\ref{eq:equivariance}), which was established in Prop. \ref{prop:smoothmorph}.  We can think of this condition as the requirement that morphisms are smooth maps that are ``almost equivariant" for the group actions.  It would be of great interest to extend the categorical equivalence result to a wider class of morphisms.

\section{Conclusion}
\label{sec:con}

We have established that in two cases---the quantization of linear spaces using the C*-Weyl algebra and the quantization of phase spaces with actions of $\mathbb{R}^d$ through Rieffel's prescription---quantization and the classical limit each define functors that together form a categorical equivalence.  This shows a precise sense in which for each of these cases, the algebraic structure of observables in classical physics corresponds with the algebraic structure of observables in quantum physics.

The models and quantization procedures we have used are among the simplest mathematically, so the present results serve primarily as a proof of concept.  There are open questions concerning whether the results can be extended either to a broader class of objects or to a broader class of arrows.  Can the results concerning Rieffel's quantization be extended to quantization by actions of non-Abelian groups via the prescription of Bieliavsky and Gayral \citep{BiGa15}?  Can the results concerning quantization of *-homomorphisms be extended to the quantization of more general Hilbert bimodules from symplectic dual pairs via the prescription of Landsman \citep{La01a,La02a}?  We leave these questions for future research. We hope that the current paper establishes merely the possibility for positive results and provides reason to be interested in quantization as a categorical equivalence.

\section*{Acknowledgements}
Funding was provided by the National Science Foundation (Grant No.
571 2043089).

\section*{Conflict of Interest Statement}

On behalf of all authors, the corresponding author states that there is no conflict of interest.

\section*{Data Availability Statement}

Data sharing not applicable to this article as no datasets were generated or analysed during the current study.

\newpage
\bibliographystyle{plain}
\bibliography{bibliography.bib}

\end{document}